\theoremstyle{plain}
\newtheorem{theorem}{Theorem}[section]
\newtheorem{proposition}[theorem]{Proposition}
\newtheorem{lemma}[theorem]{Lemma}
\theoremstyle{definition}
\newtheorem{definition}[theorem]{Definition}
\newtheorem{remark}[theorem]{Remark}
\newtheorem{example}[theorem]{Example}
\theoremstyle{remark}
\renewenvironment{thebibliography}[1]{%
\begin{oldthebibliography}{#1}%
\setlength{\baselineskip}{.9em}
\linespread{1}
\small
\setlength{\parskip}{0.3ex}%
\setlength{\itemsep}{.5em}%
}%
{%
\end{oldthebibliography}%
}
\newcommand{\E}{\mathbb{E}}
\newcommand{\F}{\mathbb{F}}
\newcommand{\N}{\mathbb{N}}
\renewcommand{\P}{\mathbb{P}}
\newcommand{\Q}{\mathbb{Q}}
\newcommand{\R}{\mathbb{R}}
\newcommand{\cP}{\mathcal{P}}
\numberwithin{equation}{section}
\begin{document}

\title{\vspace{-0em}
Buy-and-Hold Property for Fully Incomplete Markets when Super-replicating Markovian Claims
\author{
  Ariel Neufeld%
   \thanks{
   RiskLab, Department of Mathematics, ETH Zurich, \texttt{ariel.neufeld@math.ethz.ch}.
   Financial support by the Swiss National Foundation grant SNF 200020$\_$172815 is gratefully acknowledged.
  % \newline \indent 
  % \ \ The authors thank Sara Svaluto-Ferro  for fruitful discussions.
   }
 }
}
\maketitle \vspace{-1.2em}

\begin{abstract}
We show that when the price process $S$ represents a fully incomplete market,  the optimal super-replication of any Markovian claim $g(S_T)$ with $g(\cdot)$ being nonnegative and lower semicontinuous is of buy-and-hold type. Since both (unbounded) stochastic volatility models and rough volatility models are examples of fully incomplete markets, one can interpret the buy-and-hold property when super-replicating  Markovian claims as a natural phenomenon in incomplete markets.
\end{abstract}

\vspace{.9em}

{\small
\noindent \emph{Keywords} Super-replication; Fully incomplete markets; Robust pricing;\\ $\phantom{\emph{Keywords}}$ Stochastic volatility; Rough volatility

\noindent \emph{AMS 2010 Subject Classification}
91G10; 91G20
%91B28; %Finance, portfolios, investment
%93E20; %Optimal stochastic control
%60G51 %Processes with independent increments; Lévy processes
}
% % % % % % % % % % % % % % % % % % % %5
\section{Introduction}\label{sec:Intro}
Fully incomplete markets were introduced in \cite{DolinskyNeufeld.16}.  Roughly speaking, a financial market is fully incomplete if for any volatility process $\alpha$ one can find an equivalent local martingale measure $\Q$ under which $\alpha$ is close to the volatility process $\nu$ of the price process $S$. It turns out that it is a natural appearance for incomplete markets to be fully incomplete. Indeed, it was shown in \cite{DolinskyNeufeld.16} that stochastic volatility models (with unbounded volatility) like the Heston model \cite{Heston.93}, the Hull--White model \cite{HullWhite.87} and the Scott model \cite{Scott.87}, as well as rough volatility models like the one in \cite{GatheralJaissonRosenbaum.14} where the log-volatility is a fractional Ornstein-Uhlenbeck process are all examples of fully incomplete markets. 

The key property of fully incomplete markets, which is the main result obtained in \cite{DolinskyNeufeld.16}, is the following. When a financial agent is allowed to invest in the risky asset $S$ and statically in up to finite  many liquid options, the classical super-replication price (defined with respect to the given initial law $\P$ of the price process) of a (possibly path-dependent) European option $G(S)$ with $G:C[0,T]\to \R_+$ being uniformly continuous and bounded coincides with the robust super-replication price, where the super-hedging property must hold for any path. This follows from the result proven in \cite{DolinskyNeufeld.16} that for fully incomplete markets, the set of all equivalent local martingale measures are weakly dense in the set of all local martingale measures defined on the continuous path space. For more papers related to robust pricing, in particular to duality results, we refer to \cite{AcciaioBeiglbockPenknerSchachermayer.16,BartlKupperNeufeld.17,BartlKupperProemelTangpi.17,BurzoniFrittelliMaggis.15,DolinskySoner.12,DolinskySoner.15a,GuoTanTouzi.15,Hobson.98,HouObloj.15} to name but a few. 

The goal of this (short) paper is to further investigate the super-replication property in fully incomplete markets in the special case where one
 is only allowed to trade in the risky asset (i.e.\ no liquid options available), and the contingent claim is of Markovian type, i.e.\ of the form $g(S_T)$ where $S_T$  denotes the value of the stock at maturity. The main result of this paper states that in that case, even for unbounded and non-continuous payoff functions $g:\R_+ \to \R_+$, the classical super-replication price of $g(S_T)$ coincides with the (even more robust) buy-and-hold super-replication price where one can only buy (or sell) stocks at the initial time and keeps his position till maturity to super-replicate the sold financial 
 claim.  
 %This means that when requiring[VERLANGEN] to super-replicate the claim $\P$-a.s., the price  in fully incomplete markets already becomes so high that one cannot do better than when only trading with (simple) buy-and-hold strategies. 
 To prove our result, we apply techniques which were developed in \cite{DolinskyNeufeld.16}. 
  
 Our paper is motivated by the result in \cite{CvitanicPhamTouzi.99} stating that
 for 
 %when the stock price is described by a 
 stochastic volatility models with unbounded volatility, the classical super-replication price of a Markovian claim $g(S_T)$ coincides with the buy-and-hold super-replication price, even for unbounded and non-continuous payoff functions $g$.  A similar result was obtained in \cite{FreySin.99} for
 % much more restrictive payoff functions (i.e.\ European Call options), 
European Call options only,  but in more general stochastic volatility models including e.g.\ the Heston model, which is not covered in \cite{CvitanicPhamTouzi.99} since there strong regularity conditions were imposed on the coefficients of the SDE for the price process which is not fulfilled by square root models.
% like the Heston model.
 
 Summing up, our contribution is twofold. First, our result enlarges the class of price processes from stochastic volatility models to the richer class of fully incomplete markets for which the buy-and-hold property holds when super-replicating $g(S_T)$. Second, for stochastic volatility models, our main result generalizes \cite{CvitanicPhamTouzi.99,FreySin.99} in the sense that we recover the buy-and-hold property when super-replicating $g(S_T)$ for (even more) general stochastic volatility models than in \cite{FreySin.99} for unbounded non-continuous payoff functions $g$ as in \cite{CvitanicPhamTouzi.99}.
 % Second, since the class of fully incomplete markets is much richer than only stochastic volatility models, it 
 %and also for more general stochastic volatility models including popular square root models like the Heston model and the Hull-White model.
 
 The remainder of this paper is organized as follows. In Section~\ref{sec:setup}, we introduce the setup and state the main theorem of this paper. Then, we provide the proof of the main result in Section~\ref{sec:proof}.
\section{Setup and Main Result}\label{sec:setup}
Let $T$ be a finite time horizon and $(\Omega,\mathcal F,\mathbb F, \mathbb P)$ be a filtered probability space, where $\F=\{\mathcal F_t\}_{t=0}^T$ satisfies the usual conditions.  Consider a financial market which consists of one constant bank account $B_t\equiv1$ for all $t\in [0,T]$ and one risky asset with price process 
\begin{equation}\label{eq:def:financial-market}
dS_t= S_t \nu_t\,dW_t, \quad \quad S_0\equiv s_0>0,
\end{equation}
where $\nu=\{\nu_t\}_{t=0}^T$ is an $\mathbb F$-progressively measurable process with given initial point $\nu_0>0$ satisfying $\int_0^T \nu^2_s\,ds<\infty \ \mathbb{P}$-a.s., and  $W=\{W_t\}_{t=0}^T$ denotes a $\P$-$\F$-Brownian motion.

The definition for the financial market \eqref{eq:def:financial-market} to be fully incomplete  was introduced in \cite[Definition~2.1]{DolinskyNeufeld.16}. Let $\mathcal{C}(\nu_0)$ be the set of all
continuous, strictly positive
stochastic processes $\alpha\equiv{\{\alpha_t\}}_{t=0}^T$
which are adapted with respect to the filtration $\mathbb F^W$ generated by $W$ completed by the null sets, and satisfy both
that $\alpha_0=\nu_0$, and  that
$\alpha$ and $\frac{1}{\alpha}$ are uniformly bounded. 
%
%
% % % % % % % % %
%
%
\begin{definition}\label{def:fully-incomplete}
The financial market introduced in \eqref{eq:def:financial-market} is
called fully incomplete if
for any $\epsilon>0$
and any process $\alpha\in \mathcal C(\nu_0)$ there exists a
probability measure $\mathbb Q\ll\mathbb P$ such that ${\{W_t\}}_{t=0}^T$ is a $\Q$-$\mathbb F$-Brownian motion and
\begin{equation}\label{eq:def:fully-incomplete}
\mathbb Q(\|\alpha-\nu\|_{\infty}>\epsilon)<\epsilon,
\end{equation}
where $\|u-v\|_{\infty}:=\sup_{0\leq t\leq T}|u_t-v_t|$ denotes the uniform distance between $u$ and $v$.
\end{definition}
Observe that due to the structure of the financial market in $\eqref{eq:def:financial-market}$, by taking convex conbinations of the form $\lambda \P + (1-\lambda) \Q$,  we see that it is equivalent in Definition~\ref{def:fully-incomplete} to require  $\Q \approx \P$ instead of $\Q\ll \P$.  
%We refer to \cite[Lemma~8.1]{DolinskyNeufeld.16} for a
 %density pro 

It turns out that being fully incomplete is a natural phenomenon for incomplete markets.
More precisely, the following sufficient conditions for being a fully incomplete market were proven in \cite{DolinskyNeufeld.16}.
\begin{proposition}\cite[Proposition~2.3.I]{DolinskyNeufeld.16}:\label{prop:fully-incomplete-example1}
%${}$\\
%I.\,
Let the volatility process $\nu$ in \eqref{eq:def:financial-market} be of the form
\begin{equation} \label{eq:ex:stoch-vol}
d\nu_t=a(t,\nu_t)\,dt+b(t,\nu_t) \,d{\widehat W}_t+c(t,\nu_t) \,dW_t, \quad \quad \nu_0>0.
\end{equation}
If the SDE in \eqref{eq:ex:stoch-vol} has a unique strong solution and the solution is strictly positive, and if
the functions $a,b,c: [0,T]\times (0,\infty)\rightarrow\mathbb R$
are continuous and for any $t\in [0,T]$, $x>0$ we have $b(t,x)>0$, %If the SDE in \eqref{eq:ex:stoch-vol} has a unique strong solution and the solution is strictly positive, 
then 
the corresponding financial market defined in \eqref{eq:def:financial-market} is fully incomplete. 
\end{proposition}
The second sufficient condition proven in \cite{DolinskyNeufeld.16} for being a fully incomplete market is strongly related to the so-called Conditional Full Support (CFS)  property introduced in \cite{guasoni2008consistent}.  A stochastic process $\Sigma=\{\Sigma_t\}_{t=0}^T$ has the Conditional Full Support (CFS)  property if
for all $t\in (0,T]$
\begin{equation}
\label{eq:CFS}
\mbox{supp} \ \mathbb P(\Sigma_{|[t,T]}|\Sigma_{|[0,t]})=C_{\Sigma_t}[t,T] \ \ \mbox{a.s.,}
\end{equation}
where
$C_y[t,T]$ is the set of all continuous functions $f:[t,T]\rightarrow\mathbb{R}_{+}$
with $f(t)=y$. % $\Sigma$ having the (CFS) property 
It means that
from any given time on, $\Sigma$ can continue arbitrarily close
to any given path with positive conditional probability.
\begin{proposition}\cite[Proposition~2.3.II]{DolinskyNeufeld.16}:\label{prop:fully-incomplete-example2}
Let the filtration $\mathbb F$ 
%introduced above %\eqref{eq:def:financial-market}
%in the definition of the financial market
 here
be the usual augmentation of the filtration generated by $W$ and $\nu$. If % and  let 
$\nu_t=\nu^{(1)}_t\nu^{(2)}_t$, where
$\nu^{(1)}$ is adapted to the filtration generated by $W$,
$\nu^{(2)}$ is independent of $W$, both
%Moreover, assume that
$\nu^{(1)}$, $\nu^{(2)}$ are strictly positive and continuous processes, 
and
 $\ln\nu^{(2)}$ has the (CFS) property,
%(for definition of (CFS), see e.g.\ \cite[Chapter~2]{DolinskyNeufeld.16}),
then
the financial market given by \eqref{eq:def:financial-market} is fully incomplete.
\end{proposition}
As a consequence of Proposition~\ref{prop:fully-incomplete-example1} \& \ref{prop:fully-incomplete-example2}, popular stochastic volatility models like the Heston model, Hull--White model and the Scott model, as well as rough volatility models like the one in \cite{GatheralJaissonRosenbaum.14} are examples of fully incomplete markets; we refer to \cite{DolinskyNeufeld.16} for more details.
% where the log-volatility is a fractional Ornstein-Uhlenbeck process are all examples of fully incomplete markets.

%The key property of fully incomplete markets, which is the main result in \cite{DolinskyNeufeld.16}, roughly states that the classical super-replication price (defined with respect to the given initial law $\P$ of the price process) coincides with the robust super-replication price, where the super-hedging property must hold for any path. 

To recall the main property of fully incomplete markets, let $\mathfrak A^\P$ denote the set of all $\mathbb F$-progressively measurable processes $\{\gamma_t\}_{t=0}^T$ with $\int_0^T \gamma^2_t\nu^2_t S^2_t\,dt<\infty \ \P$-a.s. such that the stochastic integral $\int \gamma\,dS$ is uniformly bounded from below. The robust setup is defined as follows. 
Let $\{\mathbb S_t\}_{t=0}^T$ be the canonical process on the space $C[0,T]$, i.e.\
$\mathbb S_t(\omega)=\omega(t)$, $\omega\in C[0,T]$, and
$\mathbb F^\mathbb S_t=\sigma\{\mathbb S_u: u\leq t\}, \ t\in [0,T]$ denotes the canonical filtration. The set $\mathfrak A$ consists of processes ${\{\gamma_t\}}_{t=0}^T$ which are $\mathbb F^\mathbb S$-adapted and
of bounded variation with left-continuous paths such that the process
$\int \gamma d\mathbb S$ is uniformly bounded from below, where here $\int \gamma d\mathbb S$ is defined by
\begin{equation}\label{eq:stoch-integral}
\int_{0}^t \gamma_u d\mathbb S_u:=\gamma_t\mathbb S_t-\gamma_0\mathbb S_0-\int_{0}^t \mathbb S_s d\gamma_s,  \quad t\in [0,T],
\end{equation}
using the standard Lebesgue-Stieltjes integral for the last integral. Finally, define $\mathfrak S$ to be  the set of all paths in $C[0,T]$ which are strictly positive and starts in $S_0$. Then the main theorem in \cite{DolinskyNeufeld.16} states the following.
\begin{theorem}\cite[Theorem~3.1]{DolinskyNeufeld.16}:\label{thm:Fully-Incomplete-Dolinsky-Neufeld}
Let $G:C[0,T]\rightarrow \mathbb{R}$ be a bounded and uniformly continuous function and consider the (path-dependent) European option $G(S)$. Moreover, for every $i:=1,\dots, N$ let  $h_i:C[0,T]\to \R$ be a bounded and uniformly continuous function and $h_i(S)$ be a static position with price $\mathcal P_i$. If the financial market defined in \eqref{eq:def:financial-market} is fully incomplete,
%
%Assume that the financial market defined in \eqref{eq:def:financial-market} is fully incomplete. For  $i:=1,\dots N,$ let $h_i(S)$ with $h_i:C[0,T]\to \R$ bounded, uniformly continuous be a static position with price $\mathcal P_i$. Consider a (path-dependent) European option with  payoff
%$G(S)$, where $G:C[0,T]\rightarrow \mathbb{R}$ is a bounded and uniformly continuous function. 
then the classical super-replication price
\begin{equation}
\begin{split}
&V^\P_{h_i,\dots,h_N}(G)\\
&:=\inf\bigg\{ x+ \sum_{i=1}^N c_i \mathcal P_i \,\bigg|\, \exists \ (\gamma_t) \in \mathfrak A^\P \mbox{ s.t.\ \ } x + \sum_{i=1}^N c_i h_i(S) + \int_0^T \gamma_t \,dS_t \geq G(S)\  \P\mbox{-a.s.}\bigg\}
\end{split}
\label{eq:P-price}
\end{equation}
coincides with the robust super-replication price
\begin{equation}
\begin{split}
&V_{h_i,\dots,h_N}(G)\\
&:=\inf\bigg\{ x+ \sum_{i=1}^N c_i \mathcal P_i \,\bigg|\, \exists \ (\gamma_t) \in \mathfrak A \mbox{ s.t.\ \ } x + \sum_{i=1}^N c_i h_i(\mathbb S) + \int_0^T \gamma_t \,d\mathbb S_t \geq G(\mathbb S)\ \forall \mathbb S \in \mathfrak S\bigg\},
\end{split}
\label{eq:robust-price}
\end{equation}
i.e.\ we have
 $V^\P_{h_i,\dots,h_N}(G)= V_{h_i,\dots,h_N}(G).$
%\begin{equation*}
%V^\P_{h_i,\dots,h_N}(G)= V_{h_i,\dots,h_N}(G).
%\end{equation*}
\end{theorem}
In this paper, we analyze the super-replication property when the financial market defined in \eqref{eq:def:financial-market} is fully incomplete in the special case where  the option is of Markovian type, i.e. the option is of the form $g(S_T)$ for some function $g:\R_+\to \R_+$, and  there are no liquid options to trade with. In this case, the classical super-replication price of $g(S_T)$ is given by
%with respect to $g(S_T)$ is defined as
\begin{equation}\label{eq:P-price-0}
V^\P_0(g):=\inf\bigg\{ x \in \R\,\bigg|\, \exists \ (\gamma_t) \in \mathfrak A^\P \mbox{ s.t. \ } x +  \int_0^T \gamma_t \,dS_t \geq g(S_T)\  \P\mbox{-a.s.}\bigg\}.
\end{equation}
The robust price $V_0(g)$ is defined analogously. Another (even more robust) super-replication price 
%for the option $g(S_T)$ 
is the one where only buy-and-hold strategies are allowed.
%, i.e.  the seller of the option is only allowed to buy (or sell) an amount of stocks at the initial time to super-replicate the sold option. 
Its formal definition is
%(with respect to the option $g(S_T)$) 
%is %given by
\begin{equation}\label{eq:P-price-0-B-H}
V^{B\&H}_0(g):=\inf\bigg\{ x \in \R \,\bigg|\, \exists \ \Delta \in \R  \mbox{ s.t. \ } x +  \Delta (S_T(\omega)-S_0) \geq g(S_T(\omega))\ \forall \omega \in \Omega\bigg\}.
\end{equation}
Clearly, for any option $G$ in any financial market 
\begin{equation}\label{eq:price-relation-easy}
V^\P_0(G) \leq V_0(G) \leq V^{B\&H}_0(G),
\end{equation}
and by \cite{DolinskyNeufeld.16} we know that in fully incomplete markets, a priori, $V^\P_0(G) = V_0(G)\leq V^{B\&H}_0(G)$ holds true for  options $G$ which are bounded and uniformly continuous. 

The goal of this paper is to show that for Markovian claims, the above inequalities are in fact true equalities, even for unbounded and non-continuous payoff functions. Moreover, the price and the optimal (buy-and-hold) strategy can be calculated explicitly.
% % % % % % % % % %
\begin{theorem}\label{thm:Buy-and-Hold}
Let $g:\R_{+}\to \R_+$ be a nonnegative and lower semicontinuous function. If the financial market defined in \eqref{eq:def:financial-market} is fully incomplete, then	
%	
%Assume that the financial market defined in \eqref{eq:def:financial-market} is fully incomplete. Then for any nonnegative payoff function $g:\R_{+}\to \R_+$ being lower semicontinuous, 
the super-replication price of the  Markovian claim  $g(S_T)$ satisfies 
\begin{equation}\label{eq:thm:Buy-and-Hold-price}
V^\P_0(g)=V^{B\&H}_0(g)= \widehat g(S_0), 
\end{equation}
where $\widehat g$ denotes the concave envelope of $g$.
Moreover, an optimal (buy-and-hold) strategy exists and is explicitly defined by
\begin{equation}\label{eq:thm:Buy-and-Hold-strategy}
\gamma\equiv \partial_+ \widehat g(S_0).
\end{equation}
\end{theorem}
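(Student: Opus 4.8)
The plan is to prove the two nontrivial inequalities $V^{B\&H}_0(g)\le\widehat g(S_0)$ and $V^\P_0(g)\ge\widehat g(S_0)$; together with the elementary chain $V^\P_0(g)\le V_0(g)\le V^{B\&H}_0(g)$ recalled above, this pins all three prices to $\widehat g(S_0)$ and identifies $\gamma\equiv\partial_+\widehat g(S_0)$ as optimal. For the upper bound, since $\widehat g$ is concave it admits at the interior point $S_0$ a supporting line from above of slope $\Delta:=\partial_+\widehat g(S_0)$, so that $\widehat g(S_0)+\Delta(x-S_0)\ge\widehat g(x)\ge g(x)$ for every $x>0$. As every path satisfies $S_T(\omega)>0$, the buy-and-hold portfolio with initial capital $\widehat g(S_0)$ and position $\Delta$ dominates $g(S_T)$ pathwise, which yields $V^{B\&H}_0(g)\le\widehat g(S_0)$ together with the asserted optimal strategy; if $\widehat g(S_0)=+\infty$ all three prices are infinite and there is nothing to prove.

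The lower bound is the crux. For any equivalent local martingale measure $\Q$—available in profusion by full incompleteness, and equivalent to $\P$ by the convex-combination remark—and any $\gamma\in\mathfrak A^\P$ super-replicating $g(S_T)$, the integral $\int\gamma\,dS$ is a $\Q$-local martingale bounded from below, hence a $\Q$-supermartingale with nonpositive terminal mean; taking $\Q$-expectations in the hedging inequality (legitimate as $g\ge0$ and $\Q\ll\P$) gives $x\ge\E_\Q[g(S_T)]$, and therefore $V^\P_0(g)\ge\sup_\Q\E_\Q[g(S_T)]$, the supremum running over all such $\Q$. It remains to show that this supremum is at least $\widehat g(S_0)$.

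To realize values near $\widehat g(S_0)$ I would invoke the density theorem of \cite{DolinskyNeufeld.16}, by which the laws of $S$ under equivalent local martingale measures are weakly dense in the set of all local martingale laws on $C[0,T]$ carried by $\mathfrak S$. Fix $\epsilon>0$; by the one-dimensional description of the concave envelope together with lower semicontinuity, choose $0<a<S_0<b$ and a two-point measure $\mu_\epsilon$ on $\{a,b\}$ with barycenter $S_0$ and $\int g\,d\mu_\epsilon\ge\widehat g(S_0)-\epsilon$ (respectively $\ge 1/\epsilon$ if $\widehat g(S_0)=+\infty$). A Brownian motion started at $S_0$, stopped on first exit of $(a,b)$ and time-changed into $[0,T]$, realizes a bounded—hence genuine—martingale law $\bar\Q$ on $\mathfrak S$ with $\mathbb S_T\sim\mu_\epsilon$. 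Approximating $\bar\Q$ weakly by laws of equivalent local martingale measures $\Q_n$, and using that $\omega\mapsto\omega(T)$ is continuous on $C[0,T]$, the laws of $S_T$ under $\Q_n$ converge weakly to $\mu_\epsilon$; since $g\ge0$ is lower semicontinuous, the Portmanteau theorem gives $\liminf_n\E_{\Q_n}[g(S_T)]\ge\int g\,d\mu_\epsilon\ge\widehat g(S_0)-\epsilon$. Hence $\sup_\Q\E_\Q[g(S_T)]\ge\widehat g(S_0)-\epsilon$, and letting $\epsilon\downarrow0$ closes the lower bound.

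The main obstacle I anticipate is this last step: producing, under the positivity constraint $\mathbb S>0$, a target martingale law whose terminal marginal nearly solves the concave-envelope problem, and then transporting it to genuine equivalent local martingale measures of the given market via the density result, all while the payoff is unbounded and merely lower semicontinuous. Lower semicontinuity is precisely what makes the Portmanteau step work in the favorable direction, and confining attention to bounded two-point targets keeps every integral finite along the approximation; the delicate points are ensuring strict positivity of the approximating laws and verifying that the density theorem of \cite{DolinskyNeufeld.16} indeed applies to the bounded, degenerate target laws employed here.
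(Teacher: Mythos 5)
Your proposal is correct, but it takes a genuinely different route from the paper for the crucial inequality $V^\P_0(g)\geq \widehat g(S_0)$. Your argument is ``soft'': you reduce to $\sup_\Q \E^\Q[g(S_T)]$ by the supermartingale property (as does the paper), but then invoke the weak-density theorem of \cite{DolinskyNeufeld.16} as a black box, realize the concave envelope by a two-point martingale target (a time-changed Brownian motion stopped at the exit of $(a,b)$), and conclude via the Portmanteau inequality for nonnegative lower semicontinuous integrands. The paper never uses the density theorem directly; instead it first reduces to bounded Lipschitz payoffs via the inf-convolution approximation $g_n(x)=\min\{\inf_y\{g(y)+n|x-y|\},n\}$ combined with the optimal-stopping representation $\widehat g(S_0)=\sup_{\tau}\E^\P[g(\mathbf S^{1,S_0}_\tau)]$ from \cite[Lemma~5.4]{CvitanicPhamTouzi.99}, then proves $\widehat g(s)\leq \sup_{\alpha\in\mathcal C(\nu_0)}\E^\P[g(\mathbf S^{\alpha,s}_T)]$ (following Lemmas~7.1--7.2 of \cite{DolinskyNeufeld.16}), and finally applies Definition~\ref{def:fully-incomplete} quantitatively: a stopping time at which $|\alpha-\nu|$ first exceeds $\delta$, It\^o-isometry and Chebyshev estimates comparing $\ln S_\tau$ with $\ln \mathbf S^{\alpha,S_0}_\tau$ under $\Q$, and the key observation that $\mathbf S^{\alpha,S_0}$ has the same law under $\P$ and $\Q$ because $\alpha$ is $\F^W$-adapted and $W$ is a Brownian motion under both measures (this is where the Lipschitz property of $g$ is consumed, in the error terms $K(e^{\varepsilon}-1)S_0$). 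What your route buys: no Lipschitz/boundedness reduction at all, since Portmanteau handles unbounded lower semicontinuous $g$ in one stroke, and a visibly shorter argument. What the paper's route buys: it relies only on the defining property \eqref{eq:def:fully-incomplete} plus two technical lemmas, rather than on the full density statement, whose scope (all local martingale laws on strictly positive paths, including the degenerate and stopped targets you use) must be checked against the precise formulation in \cite{DolinskyNeufeld.16}. The delicate points you flag are real but resolvable: in dimension one the concave envelope is the supremum over two-point combinations with the given barycenter; if the optimal lower point degenerates to $0$, lower semicontinuity gives $\liminf_{a\downarrow 0}g(a)\geq g(0)$, so a strictly positive $a$ loses at most $\varepsilon$; your target laws are supported in $[a,b]\subset(0,\infty)$, hence genuine martingale laws on $\mathfrak S$; and evaluation at $T$ is continuous on $C[0,T]$, so the terminal marginals converge as needed.
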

\begin{remark}
By the cash-invariance property of both $V^\P_0$ and $V^{B\&H}_0$,  the condition that $g:\R_{+}\to \R_+$ is nonnegative could be relaxed by the requirement to be bounded from below.
\end{remark}
Next we provide an example to show that Theorem~\ref{thm:Buy-and-Hold} 
may already fail when there is one liquid option to statically trade with.
%which can be statically traded;
%fails in general when there are liquid options to statically trade with.
%
\begin{example}\label{ex:Counter-ex}
Let the financial market defined in \eqref{eq:def:financial-market} be fully incomplete, let $K>0$, and consider the function $g:\R_+ \to \R_+$ defined by $g(x)=(x-K)^+$, $x \in \R_+$. Notice that the function $g$ satisfies the conditions imposed in Theorem~\ref{thm:Buy-and-Hold} and the claim $g(S_T)=(S_T-K)^+$ corresponds to the European Call option with strike $K$ and maturity $T$. Assume that one can statically trade the European Put option $h(S_T)=(K-S_T)^+$ (with the same strike $K$ and maturity $T$), where the given price $\mathcal{P}$ of $h(S_T)$ satisfies $\mathcal{P}=\E^{\Q}[h(S_T)]$ for some equivalent local martingale measure $\Q$. Then, we have that
\begin{align}%\label{eq:example-B-H}
%\begin{split}
&V^{B\&H}_h(g)\nonumber\\
&:=\inf\bigg\{ x + c\mathcal{P}  \,\bigg|\, \exists \ \Delta \in \R  \mbox{ s.t. \ } x + c h(S_T(\omega))+  \Delta (S_T(\omega)-S_0) \geq g(S_T(\omega))\ \forall \omega \in \Omega\bigg\} \nonumber\\
&<V^{B\&H}_0(g). \label{eq:example-B-H}
%\end{split}
\end{align} 
We provide its proof at the end of Section~\ref{sec:proof}.
\end{example}
%
%
% % % % % % % % % % % % % % % % % % % % % %
\section{Proof of Theorem~\ref{thm:Buy-and-Hold} and Example~\ref{ex:Counter-ex}}\label{sec:proof}
We start the proof by first recalling the well-known (trivial) inequalities in \eqref{eq:thm:Buy-and-Hold-price}, namely:
\begin{lemma}\label{le:easy-ineq}
Let $g\colon \R_+ \to \R$ be a function and $g(S_T)$ be the corresponding Markovian claim. Then
%For any Markovian payoff $g(S_T)$, 
we have that
\begin{equation}\label{le:eq:easy-ineq}
V^\P_0(g)  \leq V^{B\&H}_0(g)\leq \widehat g(S_0),
\end{equation}
and for $\gamma \equiv\partial_+ \widehat g(S_0)$, the pair $(\widehat g(S_0),\gamma)$ is a buy-and-hold super-replicating strategy, i.e.
\begin{equation}\label{le:eq:easy-superhedge}
\widehat g(S_0) + \partial_+ \widehat g(S_0)\, (S_T(\omega)-S_0) \geq g(S_T(\omega)) \quad \forall \omega \in \Omega.
\end{equation}
\end{lemma}
\begin{proof}
Clearly, $V^\P_0(g)  \leq V^{B\&H}_0(g)$. Moreover, by the definition of $\widehat g$ being the smallest concave function bigger than $g$, we obtain for any $\omega \in \Omega$ that
\begin{equation}\label{le:eq:easy-proof}
\widehat g(S_0) + \partial_+ \widehat g(S_0)\, (S_T(\omega)-S_0) \geq \widehat g(S_T(\omega)) \geq g(S_T(\omega)).
\end{equation}
%The property $V^{B\&H}(g(S_T))= \widehat g(S_0)$ is already known due to \cite[Lemma~5.5]{CvitanicPhamTouzi.99}.
\end{proof}
%
%
% % % % % % % % % % % % % % % % % % % % % % % % % %
%
%
Next, we show that it is sufficient to prove the results in Theorem~\ref{thm:Buy-and-Hold} for any bounded, nonnegative payoff function $g:\R_{+}\to \R_+$ which is Lipschitz continuous.
Beforehand, let us quickly introduce the following notion, which we will use frequently in the rest of this paper. For any $x>0$ and any (sufficiently integrable) progressively
measurable process  $\alpha=\{\alpha_t\}_{t=0}^T$
% (with respect to $\{\mathcal G_t\}_{t=0}^T$)
we denote the corresponding stochastic exponential with respect to $W$ by
\begin{equation}\label{eq:def-S-alpha}
\mathbf S^{\alpha,x}_t:=x\,e^{\int_{0}^t \alpha_v\, dW_v-\frac{1}{2}\int_{0}^t \alpha^2_v\, dv}, \ \ t\in [0,T].
\end{equation}

\begin{lemma}\label{le:reduction-Lipschitz}
If  \eqref{eq:thm:Buy-and-Hold-price} and \eqref {eq:thm:Buy-and-Hold-strategy} in Theorem~\ref{thm:Buy-and-Hold} hold true for any bounded, nonnegative %payoff 
function $g:\R_{+}\to \R_+$ which is Lipschitz continuous, then \eqref{eq:thm:Buy-and-Hold-price} and \eqref {eq:thm:Buy-and-Hold-strategy} also hold true for any nonnegative %payoff 
function $g:\R_{+}\to \R_+$ which is  lower semicontinuous.
\end{lemma}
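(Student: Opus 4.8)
The plan is a monotone approximation of $g$ from below by bounded Lipschitz payoffs, to which the assumed (Lipschitz) version of the theorem applies, followed by a passage to the limit. First I would regularize $g$ by its truncated inf-convolution (Moreau--Yosida) envelopes: set
\[
g_n(x) := \inf_{y \in \R_+}\big\{\, g(y) + n\,|x-y| \,\big\} \wedge n, \qquad x \in \R_+,\ n \in \N.
\]
Each $g_n$ is bounded by $n$, nonnegative (since $g \geq 0$ forces both terms to be nonnegative), and $n$-Lipschitz (an infimum of $n$-Lipschitz functions, capped by a constant, is $n$-Lipschitz), so the hypothesis of the lemma applies to every $g_n$. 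Moreover $(g_n)_n$ is nondecreasing and, precisely because $g$ is lower semicontinuous, $g_n \uparrow g$ pointwise on $\R_+$: the inf-convolution recovers the lower semicontinuous hull, which here is $g$ itself, and the truncation at the level $n$ does not affect the pointwise limit. This is the only point at which lower semicontinuity (and boundedness from below) is used.

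By assumption, for each $n$ we have $V^\P_0(g_n) = V^{B\&H}_0(g_n) = \widehat{g_n}(S_0)$. Since $g_n \leq g$, monotonicity of the super-replication functional gives $\widehat{g_n}(S_0) = V^\P_0(g_n) \leq V^\P_0(g)$, and letting $n \to \infty$ yields $\sup_n \widehat{g_n}(S_0) \leq V^\P_0(g)$. Combined with the elementary bounds $V^\P_0(g) \leq V^{B\&H}_0(g) \leq \widehat g(S_0)$ from Lemma~\ref{le:easy-ineq}, it therefore suffices to establish the single identity $\sup_n \widehat{g_n}(S_0) = \widehat g(S_0)$; this sandwiches all three quantities between $\widehat g(S_0)$ and itself and forces the equalities in \eqref{eq:thm:Buy-and-Hold-price}. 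Note that the degenerate case $\widehat g(S_0) = +\infty$ is covered automatically, since then $\widehat{g_n}(S_0) \to +\infty$ already forces $V^\P_0(g) = +\infty$.

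For that key identity I would show that taking concave envelopes commutes with the increasing limit. Set $h := \lim_n \widehat{g_n} = \sup_n \widehat{g_n}$, which exists as a nondecreasing limit (from $g_n \leq g_{n+1}$ one gets $\widehat{g_n} \leq \widehat{g_{n+1}}$). A pointwise increasing limit of concave functions is again concave, and since $\widehat{g_n} \geq g_n$ we have $h \geq \sup_n g_n = g$; thus $h$ is a concave majorant of $g$, whence $h \geq \widehat g$. Conversely $g_n \leq g$ gives $\widehat{g_n} \leq \widehat g$ for every $n$, so $h \leq \widehat g$. Therefore $h = \widehat g$, and in particular $\widehat{g_n}(S_0) \uparrow \widehat g(S_0)$, as required.

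Finally, the optimality of the explicit buy-and-hold strategy in \eqref{eq:thm:Buy-and-Hold-strategy} needs no further approximation: Lemma~\ref{le:easy-ineq}, applied directly to the given $g$, already exhibits $(\widehat g(S_0),\, \partial_+ \widehat g(S_0))$ as a buy-and-hold super-replicating pair of cost $\widehat g(S_0)$, and since we have now identified $V^{B\&H}_0(g) = \widehat g(S_0)$, this strategy attains the infimum and is hence optimal. The step I expect to be the main obstacle is the verification that $g_n \uparrow g$ pointwise, which genuinely relies on the lower semicontinuity of $g$ (together with $g \geq 0$ to make the inf-convolution finite and well-behaved); the concave-envelope interchange and the monotonicity of the price are comparatively routine.
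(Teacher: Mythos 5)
Your proof is correct, and while it uses exactly the same approximating sequence as the paper (the truncated inf-convolutions $g_n = \widetilde g_n \wedge n$, with lower semicontinuity entering only through $g_n \uparrow g$), it handles the crucial identity $\sup_n \widehat{g_n}(S_0) = \widehat g(S_0)$ by a genuinely different route. The paper invokes the probabilistic representation of the concave envelope via optimal stopping of geometric Brownian motion, $\widehat g(S_0) = \sup_{\tau \in \mathcal T} \E^\P\big[g\big(\mathbf S^{1,S_0}_\tau\big)\big]$ from \cite[Lemma~5.4]{CvitanicPhamTouzi.99}, and then applies monotone convergence and an exchange of suprema to get $\widehat g(S_0) = \sup_n \widehat{g_n}(S_0)$. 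You instead prove directly, by elementary convex analysis, that concave envelopes commute with nondecreasing limits: $h := \sup_n \widehat{g_n}$ is concave as an increasing limit of concave functions, majorizes $\sup_n g_n = g$, hence $h \geq \widehat g$, while $\widehat{g_n} \leq \widehat g$ gives the reverse inequality. Your argument is self-contained and avoids the external citation (and any integrability or measurability bookkeeping the stopping representation carries with it), at the cost of nothing; it also cleanly covers the case $\widehat g(S_0) = +\infty$, which the paper passes over silently. The remaining steps --- monotonicity of $V^\P_0$ in the claim, the sandwich against Lemma~\ref{le:easy-ineq}, and the observation that the optimal strategy statement \eqref{eq:thm:Buy-and-Hold-strategy} needs no approximation because Lemma~\ref{le:easy-ineq} already exhibits $\big(\widehat g(S_0), \partial_+ \widehat g(S_0)\big)$ as a super-replicating pair --- coincide with the paper's, with your explicit treatment of the strategy part being slightly more careful than the original, which leaves it implicit.
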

\begin{proof}
Let $g:\R_{+}\to \R_+$ be any nonnegative, lower semicontinuous function. Due to Lemma~\ref{le:easy-ineq}, it remains to show that $V^\P_0(g) \geq \widehat g(S_0)$.
% Assume by contradiction that 
%\begin{equation}\label{eq:le:reduction-Lipschitz1}
%V^\P_0(g(S_T))< \widehat g(S_0).
%\end{equation}
  Define the sequence of functions
\begin{equation}
\begin{split}
\widetilde g_n(x) &:= \inf_{y\geq 0} \big\{g(y)+ n|x-y|\big\}, \quad x\geq 0;\\
 g_n(x) &:= \min\big\{\widetilde g_n(x),n \big\}, \quad x\geq 0.
 \end{split}
 \label{eq:Lipschitz-approx}
\end{equation}
We see that for each $n$, the function $g_n$ is bounded, nonnegative and Lipschitz continuous (with Lipschitz constant $n$). Moreover, the sequence $(g_n)$ converges non-decreasingly to $g$. Let $\mathcal T$ denote all $\F$-stopping times. % taking values in $[0,T]$. 
Using \cite[Lemma~5.4]{CvitanicPhamTouzi.99} and the monotone convergence theorem, we see that
\begin{equation}\label{eq:le:reduction-Lipschitz2}
\widehat g(S_0)
= \sup_{\tau \in \mathcal T} \E^\P[g(\mathbf S^{1,S_0}_\tau)]
=\sup_{n \in \N} \sup_{\tau \in \mathcal T}  \E^\P[g_n(\mathbf S^{1,S_0}_\tau)]
= \sup_{n \in \N} \widehat g_n(S_0)
\end{equation}
%i.e. also the correponding sequence of concave envelopes $(\widehat g_n)$ converges monotonically to the concave envelope $\widehat g$. 
%Denote by $\mathfrak M^e_m(\P)$ the set of all equivalent local martingale measures of $S$ with respect to $\P$. Now,
By the assumption that \eqref{eq:thm:Buy-and-Hold-price} and \eqref {eq:thm:Buy-and-Hold-strategy} holds true for bounded, nonnegative payoff functions which are Lipschitz continuous and since the super-replication price is monotone in the claim 
%and that  we conclude from the classical super-replication duality (see [TODO]) and by using monotone convergence theorem that
\begin{equation} \label{eq:le:reduction-Lipschitz3}
\sup_{n \in \N} \widehat g_n(S_0)=\sup_{n \in \N}  V^\P_0(g_n) \leq V^\P_0(g) 
\end{equation}
%\begin{align}
%\sup_{n \in \N} \widehat g_n(S_0)
%=\sup_{n \in \N}  V^\P_0(g_n(S_T))
%=\sup_{n \in \N}  \sup _{\Q \in \mathfrak{M}^e_m(\P)} \E^\Q[g_n(S_T)]
%= \sup _{\Q \in \mathfrak{M}^e_m(\P)} \E^\Q[g(S_T)]
%= V^\P_0(g(S_T)). \label{eq:le:reduction-Lipschitz3}
%\end{align}
Thus, by \eqref{eq:le:reduction-Lipschitz2}--\eqref{eq:le:reduction-Lipschitz3}  we obtained $\widehat g(S_0)\leq V^\P_0(g)$ as desired.
% that
%\begin{equation*}
%\widehat g(S_0)=V^\P_0(g(S_T))<\widehat g(S_0),
%\end{equation*}
%which is clearly a contradiction
\end{proof}
%
%
% % % % % % % % % % % % % % % % % % %
Due to Lemma \ref{le:reduction-Lipschitz}, it suffices to prove Theorem~\ref{thm:Buy-and-Hold} for bounded, Lipschitz continuous payoff functions $g: \R_+ \to \R_+$. To do so, we first start with a Lemma regarding an upper bound for the concave envelope.
% % % % % % % % % % % % % % % % % % % % % % % % % % % %
%
%
\begin{lemma}\label{le:upper-bound-envelope}
Let $g:\R_+ \to \R_+$ be bounded, nonnegative and Lipschitz continuous. Then
\begin{equation}\label{le:eq:upper-bound-envelope}
\widehat{g}(s) \leq \sup_{\alpha \in \mathcal C(\nu_0)} \E^\P[g(\mathbf S^{\alpha,s}_T)], \quad \quad \forall s> 0.
\end{equation}
\end{lemma}
\begin{proof}
Introduce the set $\mathcal{A}$ of all nonnegative progressively measurable processes with respect to the  filtration $\mathbb F^W$ generated by the  Brownian motion $W$ satisfying $\int_0^T \alpha^2_t\,dt <\infty $ $\P$-a.s. and for which there exists a constant $C>0$ (which may depend on $\alpha$), such that $\frac{1}{C}\leq \mathbf S^{\alpha,1}\leq C$. By the same argument as in \cite[Lemma~7.1]{DolinskyNeufeld.16}, the function $\mathfrak G:(0,\infty)\to \R$
defined by
\begin{equation}\label{le:upper-bound-envelope-pf1}
\mathfrak G(s):= \sup_{\alpha \in \mathcal A} \E^\P\big[ g(\mathbf S^{\alpha,s}_T)\big]
\end{equation}
is concave and satisfies $g\leq \mathfrak G$. By the minimality property of the concave envelope of $g$, this means that also 
\begin{equation}\label{le:upper-bound-envelope-pf2}
\widehat g(s) \leq \mathfrak G(s), \quad \forall s>0.
\end{equation} 
Then, by the same approximation argument as in \cite[Lemma~7.2]{DolinskyNeufeld.16}, where in this step we use the Lipschitz property of $g$, we obtain that
\begin{equation}\label{le:upper-bound-envelope-pf3}
\mathfrak G(s) \leq \sup_{\alpha \in \mathcal C(\nu_0)} \E^\P[g(\mathbf S^{\alpha,s}_T)],   \quad  \forall s>0,
\end{equation}
which implies the result.
\end{proof}

%
%
% % % % % % % % % % %
Now we are able to finish the proof of Theorem~\ref{thm:Buy-and-Hold}.
\begin{proof}[Proof of Theorem~\ref{thm:Buy-and-Hold}]
We follow a similar argument as the one in \cite[Theorem~4.2]{DolinskyNeufeld.16}.
Let $g: \R_+ \to \R_+$ be bounded, nonnegative and Lipschitz continuous. By Lemma~\ref{le:easy-ineq} and Lemma~\ref{le:reduction-Lipschitz}, it remains to show that $V_0^\P(g)\geq \widehat g(S_0)$. Thus, we fix any $x>V_0^\P(g)$ and need to show that $x \geq \widehat g(S_0)$. %To that end, fix any $x>V^\P(g(S_T))$. %Denote by $\widetilde{S}:= \frac{S}{B}$ the discounted price process. 
By definition of $x$, there exists an $\F$-predictable, $S$-integrable process $\{\gamma_t\}_{t=0}^T$ such that
\begin{equation}\label{eq:pf-supehed}
x + \int_0^T \gamma_t \, d S_t \geq g(S_T) \quad \P\mbox{-a.s.}
\end{equation}
%$(x + \int_0^T \gamma_t \, d\widetilde S_t) \geq \frac{1}{B_T}\,g(S_T) \ \P\mbox{-a.s.}$. 
Fix any $\varepsilon>0$.
% Since the interest rate $(r_t)$ is uniformly bounded, there exists a (deterministic) time $0<\widetilde T\leq T$ such that
%\begin{equation}\label{eq:proof-thm-buy-and-hold-0}
%\int_0^{\widetilde T} r_t \,dt \leq \varepsilon.
%\end{equation}
%Now, 
By Lemma~\ref{le:upper-bound-envelope}, there exists $\alpha \in \mathcal C (\nu_0)$ such that
\begin{equation}\label{eq:proof-thm-buy-and-hold-1}
\widehat g(S_0)-\varepsilon\leq \E^\P\big[g(\mathbf S^{\alpha,S_0}_T)\big].
\end{equation}
Next, choose any $\delta>0$ small enough (i.e. $\delta \ll \varepsilon$). Since the financial market defined in \eqref{eq:def:financial-market} is fully incomplete, there exists $\Q\ll\P$ such that $W$ is a $\Q$-$\F$-Brownian motion and
\begin{equation}\label{eq:proof-thm-buy-and-hold-FULLY}
\Q\big( \Vert \alpha-\nu\Vert_\infty\geq\delta\big)<\delta.
\end{equation}
Since $(x,\gamma)$ is a super-replicating strategy, the supermartingale property of the gain process yields  
\begin{align}
\E^\Q\big[g(S_T)\big]
\leq \E^\Q\Big[x + \int_0^T \gamma_t \,dS_t\Big] \leq x.
 \label{eq:proof-thm-buy-and-hold-2}
\end{align}
Now, define the stopping time
\begin{equation}\label{eq:pf-def-stopping-time}
\tau:= \inf \big\{ t \geq 0 \, \big|\, |\alpha_t-\nu_t|\geq\delta \big\} \wedge T.
\end{equation}
Then, by definition of $\tau$
\begin{equation}
\begin{split}
\frac{1}{2} \int_0^\tau |\alpha^2_t-\nu^2_t| \, dt 
&=  \frac{1}{2} \int_0^\tau |\alpha^2_t-(\alpha_t-(\alpha_t-\nu_t))^2| \, dt\\
&= \frac{1}{2} \int_0^\tau |\alpha^2_t-(\alpha_t^2-2 \alpha_t (\alpha_t-\nu_t) +(\alpha_t-\nu_t)^2)| \, dt\\
&= \frac{1}{2} \int_0^\tau |2 \alpha_t (\alpha_t-\nu_t) -(\alpha_t-\nu_t)^2)| \, dt \\
&\leq \frac{1}{2} T \delta \,(2 \Vert \alpha \Vert_\infty + \delta),
\end{split}
\label{eq:pf-stopping-ineq1}
\end{equation}
as well as due to the It\^o isometry
\begin{equation}\label{eq:pf-ineq-Ito-isometry}
\E^\Q\Big[\Big(\int_0^\tau(\alpha_t-\nu_t) \, dW_t\Big)^2\Big]\leq \delta^2 T.
\end{equation}
Thus, Chebyshev's inequality implies that
\begin{equation}
\begin{split}
&\Q\Big( \frac{1}{2} \int_0^\tau | \alpha_t^2- \nu_t^2|\,dt + \Big|\int_0^\tau (\alpha_t-\nu_t)\, dW_t\Big| \geq 2 \sqrt{\delta}\Big) \\
&\leq 
\frac{\frac{1}{2} \E^\Q\big[\int_0^\tau | \alpha_t^2- \nu_t^2|\,dt\big]}{\sqrt{\delta}} 
+ \frac{\E^\Q\big[(\int_0^\tau(\alpha_t-\nu_t) \, dW_t)^2\big]}{\delta}\\
&\leq 
\frac{\frac{1}{2}  T \delta \,(2 \Vert \alpha \Vert_\infty + \delta)}{\sqrt{\delta}} + \frac{\delta^2 T}{\delta}\\
&\leq
c(\sqrt{\delta}+ \delta). %\label{eq:proof-thm-buy-and-hold-3}
\end{split}
\label{eq:proof-thm-buy-and-hold-3}
\end{equation}
for some constant $c$ which may depend on $\varepsilon$ (but not on $\delta$). Now, since by definition, the price process $S$ is defined as $S\equiv\mathbf S^{\nu,S_0}$, we deduce from \eqref{eq:proof-thm-buy-and-hold-3} that for sufficiently small $\delta$
\begin{equation}
\begin{split}
\Q\Big(|\ln S_{\tau}- \ln \mathbf S^{\alpha,S_0}_\tau|> \varepsilon\Big)
&\leq 
\Q\Big(\frac{1}{2} \int_0^\tau | \alpha_t^2- \nu_t^2|\,dt + \Big|\int_0^\tau (\alpha_t-\nu_t)\, dW_t\Big| > \varepsilon\Big)\\
&\leq c(\sqrt{\delta}+ \delta).
\end{split}
\label{eq:proof-thm-buy-and-hold-4}
\end{equation}
Next, define the event
\begin{equation}\label{eq:pf-event}
U_\varepsilon:= \{\tau<T\} \cup \{|\ln S_{\tau}- \ln \mathbf S^{\alpha,S_0}_\tau|> \varepsilon\}
\end{equation}
It is elementary to check that since $g\colon\R_+ \to \R_+$ is Lipschitz continuous with respect to some Lipschitz-constant $K>0$, we have for any $x>0, y>0$ with $|\ln(x)-\ln(y)|\leq \epsilon$, that $g(x)\geq  g(y)-Ky\,(e^{\epsilon}-1)$. Therefore, using \eqref{eq:proof-thm-buy-and-hold-2} and the definition of (the complement of) $U_\varepsilon$ yields
\begin{equation}
\begin{split}
x\geq  \E^\Q\big[g(S_T)\big]
&\geq \E^\Q\big[\mathbf{1}_{U^c_\epsilon} \,g(S_T)\big] \\
 &\geq 
 \E^\Q\big[\,g(\mathbf S^{\alpha,S_0}_T)\big] - K(e^{\epsilon}-1) \,\E^\Q\big[\mathbf S^{\alpha,S_0}_T\big]
- \E^\Q\big[\mathbf 1_{U_\epsilon} \, g(\mathbf S^{\alpha,S_0}_T)\big]  \\
&\geq 
 \E^\Q\big[\,g(\mathbf S^{\alpha,S_0}_T)\big] - K(e^{\epsilon}-1) S_0
- \E^\Q\big[\mathbf 1_{U_\epsilon} \, g(\mathbf S^{\alpha,S_0}_T)\big].
\end{split}
 \label{eq:proof-thm-buy-and-hold-5}
 \end{equation}
Now, before letting $\varepsilon$ go to zero, we first need to analyze $\E^\Q[\mathbf 1_{U_\epsilon} \, g(\mathbf S^{\alpha,S_0}_T)]$. To that end, see that by the Lipschitz property of $g\geq0$ 
%and Doob's inequality
\begin{equation}
\begin{split}
\E^\Q\big[g(\mathbf S^{\alpha,S_0}_T)^2\big] \leq \E^\Q\big[(g(0) + K \mathbf S^{\alpha,S_0}_T)^2\big]
&\leq 2 g(0)^2 + 2K^2\, 
\E^\Q\big[(\mathbf S^{\alpha,S_0}_T)^2\big].
%\\
%&\leq  2 g(0) + 4K^2 \,
 %\E^\P\Big[ \big(\mathbf S^{1,s}_{\widetilde T}\big)^2\Big].
 \end{split}
 \label{eq:pf-g-square1}
\end{equation}
Moreover, since $\alpha \in \mathcal C(\nu_0)$ is uniformly bounded, we see that for some constant $C>0$, 
\begin{equation} \label{eq:pf-g-square2}
\E^\Q\big[(\mathbf S^{\alpha,S_0}_T)^2\big]\leq C \E^\Q\big[\mathbf S^{2\alpha,S_0}_T\big]<\infty.
\end{equation}
Hence
we conclude that $\E^\Q[g(\mathbf S^{\alpha,S_0}_T)^2]<\infty$.
%[NEIN]since $\alpha \in \cA$, we have $\E^\P\Big[ \big(\mathbf S^{\alpha,s}_{T}\big)^2\Big] \leq C^2$,[NEIN]
Furthermore, observe that 
\begin{equation} \label{eq:pf-event-ineq}
\{\tau<T\}\subseteq \{\Vert \alpha-\nu\Vert_\infty\geq\delta\}.
\end{equation} By the Cauchy-Schwarz inequality, \eqref{eq:proof-thm-buy-and-hold-FULLY} and \eqref{eq:proof-thm-buy-and-hold-4}, this implies for sufficiently small $\delta$ that
\begin{equation}\label{eq:pf-event-cond-ineq}
\E^\Q\big[\mathbf 1_{U_\varepsilon} \, g(\mathbf S^{\alpha,S_0}_T)\big]\leq \E^\Q\big[g(\mathbf S^{\alpha,S_0}_T)^2\big]^{\frac{1}{2}} \, \Q\big[U_\epsilon\big]^{\frac{1}{2}} \leq C (\delta + \sqrt{\delta})^{\frac{1}{2}}<\varepsilon.
\end{equation}
Therefore, we deduce from \eqref{eq:proof-thm-buy-and-hold-5} that
\begin{equation}\label{eq:pf-x-lower-bound1}
x\geq \E^\Q\big[\,g(\mathbf S^{\alpha,S_0}_T)\big] - K(e^{\epsilon}-1) S_0
- \varepsilon.
\end{equation}
Recall that under both measures $\P$ and $\Q$, the process $W$ is an $\F$-Brownian motion, and that the process $\alpha$ is  progressively measurable with respect to the smaller filtration $\mathbb F^W$ generated by $W$. In particular, $\alpha= \varphi(W)$ for some progressively measurable map $C[0,T]\to C[0,T]$. Therefore, the law of $\mathbf S^{\alpha,S_0}_T$ under $\P$ and $\Q$ are the same. This implies together with \eqref{eq:proof-thm-buy-and-hold-1} that
\begin{equation}
\begin{split}
x&\geq \E^\Q\big[\,g(\mathbf S^{\alpha,S_0}_T)\big] - K(e^{\epsilon}-1)S_0 - \varepsilon\\
&= \E^\P\big[\,g(\mathbf S^{\alpha,S_0}_T)\big] - K(e^{\epsilon}-1)S_0 -  \varepsilon \\
&\geq \widehat g(S_0)- K(e^{\epsilon}-1)S_0 -  2\varepsilon.
\end{split}
\label{eq:pf-x-lower-bound2}
\end{equation}
Since $\varepsilon>0$ was arbitrarily chosen, we can now let $\epsilon$ go to zero to obtain desired inequality 
\begin{equation}\label{eq:pf-x-lower-bound-desired}
x\geq \widehat g(S_0).
\end{equation}
\end{proof}
We finish this section with the proof of Example~\ref{ex:Counter-ex}.
\begin{proof}[Proof of Example~\ref{ex:Counter-ex}]
First, notice that since by assumption the financial market defined in \eqref{eq:def:financial-market} is fully incomplete and the function $g(x):=(x-K)^+$, $x \in \R_+$, is nonnegative and continuous, we deduce from Theorem~\ref{thm:Buy-and-Hold} that
\begin{equation}\label{ex:V0}
V^{B\&H}_0(g)= \widehat g(S_0)=S_0.
\end{equation}
Therefore, it remains to show that $V^{B\&H}_h(g)<S_0$ for %the function 
$h(S_T):=(K-S_T)^+$. 
To that end, 
%we claim 
notice that for any equivalent local martingale measure $\Q$ we have that $\E^{\Q}[(K-S_T)^+]<K$. Indeed, since $S$ is strictly positive, we have  $(K-S_T)^+< K$ and so also $\E^{\Q}[(K-S_T)^+]< K$. 
%Assume by contradiction that there exists an equivalent local martingale measure $\Q$ which satisfies that $\E^{\Q}[(K-S_T)^+]=K$. Then we must have that $\E^{\Q}[S_T\,\mathbf{1}_{\{K\geq S_T\}}]=0$, which in turn implies that $S_T>K \ \Q$-a.s.. But the assumption $K>S_0$ gives us a contradiction, since $S$ is a $\Q$-local martingale. Therefore, we have shown that indeed, for any  equivalent local martingale measure $\Q$ we have that $\E^{\Q}[(K-S_T)^+]<K$.

Now, %we can show that indeed, $V^{B\&H}_h(g)<S_0$. To see this, 
recall that by assumption, the price of the static option $h(S_T)=(K-S_T)^+$ is equal to $\cP=\E^\Q[(K-S_T)^+]$ for some equivalent local martingale measure $\Q$, hence from the above argument we know that $\cP<K$. Next observe that starting with initial capital $S_0-K$, buying one stock and holding this position as well as buying one European Put option (super-)replicates pointwise the European Call option.
Indeed, for any $\omega \in \Omega$, we have that
\begin{equation}
\label{ex:superrep1}
(S_0-K) + (K-S_T(\omega))^+ + (S_T(\omega)-S_0)
%=(K-S_T(\omega))^+ +(S_T(\omega)-K)
=(S_T(\omega)-K)^+.
\end{equation}
Therefore, using $x:=S_0-K$, $c:=1$,  $\Delta:=1$ we see that 
\begin{equation}
\label{ex:superrep2}
x+ ch(S_T(\omega)) + \Delta(S_T(\omega)-S_0)\geq g(S_T(\omega)) \quad \forall \omega \in \Omega,
\end{equation}
which by definition of $V^{B\&H}_h(g)$ and the above proven fact that $\cP<K$ ensure that
\begin{equation}\label{ex:superrep-conclusion}
V^{B\&H}_h(g)\leq x+ c\mathcal{P}=(S_0-K) + \cP < S_0. 
\end{equation}
 	\end{proof}
%%%%%%%%%%%%%%%%%%%%%%%%%%%%%%%%%%%%%%%%%%%%%%%%%%%%%%%%%%%%%%%%%
%\bibliography{stochfin.bib}

\begin{thebibliography}{10}

\bibitem[Acciaio et al.(2016)]{AcciaioBeiglbockPenknerSchachermayer.16}
B.~Acciaio, M.~Beiglb{\"o}ck, F.~Penkner \& W.~Schachermayer (2016)
\newblock A model-free version of the fundamental theorem of asset pricing and
  the super-replication theorem.
\newblock {\em Mathematical Finance}, 26(2):233--251. %, 2016.

\bibitem[Bartl et al.(2018)]{BartlKupperNeufeld.17}
D.~Bartl, M.~Kupper \& A.~Neufeld (2018)
\newblock Pathwise superhedging on prediction sets.
\newblock {\em Preprint, arXiv:1711.02764v2}.
%, 2018.

\bibitem[Bartl et al.(2017)]{BartlKupperProemelTangpi.17}
D.~Bartl, M.~Kupper, D.~Pr{\"{o}}mel \& L.~Tangpi (2017)
\newblock Duality for pathwise superhedging in continuous time.
\newblock {\em Preprint, arXiv:1705.02933}. %, 2017.

\bibitem[Burzoni et al.(2017)]{BurzoniFrittelliMaggis.15}
M.~Burzoni, M.~Frittelli \& M.~Maggis (2017)
\newblock Model-free superhedging duality.
\newblock {\em The Annals of Applied Probability}, 27(3):1452--1477. %,  2017.

\bibitem[Cvitanic et al.(1999)]{CvitanicPhamTouzi.99}
J.~Cvitani\'c, H.~Pham \& N.~Touzi (1999)
\newblock Super-replication in stochastic volatility models under portfolio
  constraints.
\newblock {\em Journal of Applied Probability}, 36(2):523--545. %, 1999.

\bibitem[Dolinsky \& Neufeld(2018)]{DolinskyNeufeld.16}
Y.~Dolinsky \& A.~Neufeld (2018)
\newblock Super-replication in fully incomplete markets.
\newblock {\em Mathematical Finance}, 28(2):483--515. %, 2018.

\bibitem[Dolinsky \& Soner(2014)]{DolinskySoner.12}
Y.~Dolinsky \& H.~M. Soner (2014)
\newblock Martingale optimal transport and robust hedging in continuous time.
\newblock {\em Probability Theory and Related Fields}, 160(1--2):391--427. %, 2014.

\bibitem[Dolinsky \& Soner(2015)]{DolinskySoner.15a}
Y.~Dolinsky \& H.~M. Soner (2015)
\newblock Martingale optimal transport in the {S}korokhod space.
\newblock {\em Stochastic Processes and their Applications}, 125:3657--4020. %,2015.

\bibitem[Frey \& Sin(1999)]{FreySin.99}
R.~Frey \& C.A. Sin (1999)
\newblock Bounds on european option prices under stochastic volatility.
\newblock {\em Mathematical Finance}, 9:97--116. %, 1999.

\bibitem[Gatheral et al.(2018)]{GatheralJaissonRosenbaum.14}
J.~Gatheral, T.~Jaisson \& M.~Rosenbaum (2018)
\newblock Volatility is rough.
\newblock {\em Quantitative Finance}, 18(6):933--949. %, 2018.

\bibitem[Guasoni et al.(2008)]{guasoni2008consistent}
P.~Guasoni, M.~R{\'a}sonyi \& W.~Schachermayer (2008)
\newblock Consistent price systems and face-lifting pricing under transaction costs.
\newblock {\em The Annals of Applied Probability}, 18(2):491--520. %, 2008.


\bibitem[Guo et al.(2017)]{GuoTanTouzi.15}
G.~Guo, X.~Tan \& N.~Touzi (2017)
\newblock Tightness and duality of martingale transport on the skorokhod space.
\newblock {\em Stochastic Processes and their Applications}, 127(3):927-956. %, 2017

\bibitem[Heston(1993)]{Heston.93}
S.L. Heston (1993)
\newblock A closed-form solution for options with stochastic volatility with
  applications to bond and currency options.
\newblock {\em Review of Financial Studies}, 6(2):327--343. %, 1993.

\bibitem[Hobson(1998)]{Hobson.98}
D.~Hobson (1998) %.
\newblock Robust hedging of the lookback option.
\newblock {\em Finance and Stochastics}, 2(4):329--347. %, 1998.


\bibitem[Hou \& Ob{\l}{\'o}j(2018)]{HouObloj.15}
Z.~Hou \& J.~Ob{\l}{\'o}j (2018)
\newblock Robust pricing-hedging dualities in continuous time.
\newblock {\em Finance and Stochastics}, 22(3):511--567. %, 2018.

\bibitem[Hull \& White(1987)]{HullWhite.87}
J.~Hull \& A.~White (1987)
\newblock The pricing of options on assets with stochastic volatilities.
\newblock {\em Journal of Finance}, 42(2):281--300. %, 1987.

\bibitem[Scott(1987)]{Scott.87}
L.O. Scott (1987)
\newblock Option pricing when the variance changes randomly: Theory,
  estimation, and an application.
\newblock {\em Journal of Financial and Quantitative Analysis}, 22(4):419--438. %, 1987.

\end{thebibliography}
\bibliographystyle{plainnat}  
% % % % % % % % % % % % % % % % % % % % % % % % % % % % % %
%\newcommand{\dummy}[1]{}
%\begin{thebibliography}{10}
\newcommand{\dummy}[1]{}

%\end{thebibliography}
%%%%%%%%%%%%%%%%%%%%%%%%%%%%%%%%%%%%%%%%%%%%%%%%%%%%%%%%%%%%%%%%% 

\end{document}